\documentclass[pra,twocolumn,preprintnumbers,amsmath,amssymb,superscriptaddress]{revtex4}

\usepackage{graphicx}
\usepackage{dcolumn}
\usepackage{bm,color}
\usepackage{amsmath, latexsym, amssymb, amscd, amsfonts, amsthm, epsfig, mathrsfs, graphicx, url,verbatim}
\usepackage[english]{babel}
\usepackage[braket]{qcircuit}

\newtheorem{theorem}{Theorem}

\newtheorem{definition}{Definition}

\newtheoremstyle{mytheo}
  {16pt}
  {0pt}
  {}
  {}
  {\bfseries}
  {:}
  {.5em}
  {}

\theoremstyle{mytheo}

\begin{document}

\title{Braid map definition of $SLH$-networks and the series product}

\author{Luc Bouten}


\begin{abstract}
This article investigates a cascade of quantum systems within an 
$SLH$-network. The original definition of an $SLH$-network depends 
on an open conjecture regarding the essential selfadjointness of 
the generator of the time evolution of the network \cite[Conjecture 11]{GoJ09b}. This problem 
is avoided by using  the so-called braid map, which leads 
to a rigorous definition of the $SLH$-network \cite{BoG26}. 
An expression is derived for the 
time-evolution of the network with finite delay.
The series product is recovered in the instantaneous 
feed-forward limit as a theorem that no longer depends on 
the conjecture \cite[Conjecture 11]{GoJ09b}.
\end{abstract}

\maketitle

\section{Introduction}

The initial mathematical definition of the 
time evolution of an 
$SLH$-network was given by a symmetric 
operator which was 
conjectured to be essentially self-adjoint 
on its given domain \cite[Conjecture 11]{GoJ09b}. 
Since the conjecture is still open, the definition 
of the network is conditional on the conjectured 
essential self-adjointness of the generator of 
its time evolution. There are three possible 
outcomes for the conjecture: 1.\ the conjecture 
is true, in which case there is no problem, 2.\ the 
conjecture is false and there exist many self-adjoint 
extensions, in which case the definition is not unique, 3.\
the conjecture is false and there exist no self-adjoint extensions 
in which case the network cannot be defined in this way. 
Since it is not clear how the conjecture will be resolved, it 
remains ambiguous whether the Gough-James definition 
will lead to a sound mathematical object. 

This problem with the essential self-adjointness of 
the generator of the time evolution is solved 
in \cite{BoG26} by resorting to an alternative definition
of an $SLH$-network. The network evolution is 
defined by the unitary cocycle that solves the quantum 
stochastic differential equation (QSDE) \cite{HuP84, GC84} of the network without 
the connections. This time evolution 
is then periodically interrupted, swapping 
 the output and input of channels that are connected. This is 
done by  the so-called \emph{braid map} \cite{BoG26}. 
This procedure provides an unambiguous and rigorous definition of an 
$SLH$-network \cite{BoG26}.  
 
The most important feature of the $SLH$-framework is the set of 
algebraic rules for composing and reducing networks 
\cite{GoJ09, GoJ09b, CKS17}. One of the most important 
algebraic relations is the so-called \emph{series product}
\cite{Gar93, Car93a, GoJ09, GoJ09b}. 
The series product provides the $SLH$-triplet for a network 
in which the $m$ outputs of one node are fed forward to 
the $m$ inputs of another node in the zero delay limit.
Since \cite{BoG26} provides a rigorous definition of an $SLH$-network, 
it is now possible to derive the series product using the braid map 
definition of the network. In essence, this transforms the series product 
from a formal rule (rule conditional on the resolution of the conjecture) 
to a theorem.

In this article it is shown that in the instantaneous feed-forward 
(zero delay) limit, the network time evolution converges strongly, 
uniformly on compact time intervals, to the solution of the 
QSDE associated to the $SLH$-triple given by the series product, 
see equations \eqref{equation main result} and \eqref{equation series product}  below. Along the 
way an expression is obtained for the network evolution 
with finite delay, see equation \eqref{equation second result} below.

\section{The quantum stochastic calculus}

A single node in an $SLH$-network consists of a quantum 
system represented on some separable 
complex Hilbert space $\mathcal{H}$ 
and $m$ input-output channels. 
We call $m$ the multiplicity of 
the node.

For an interval $I\subset \mathbb{R}$, we 
define the \emph{bosonic Fock space with multiplicity} $m$ 
as
\begin{equation}\label{def Fock space}
\mathcal{F}_I = \mathbb{C} \oplus \bigoplus_{n =1}^\infty L^2(I; \mathbb{C}^m)^{\otimes_s n}.
\end{equation}
Here $L^2(I; \mathbb{C}^m)$ stands for 
the Hilbert space of  all quadratically integrable 
functions on $I$ with values 
in $\mathbb{C}^m$. 
Note that the Fock space $\mathcal{F}_I$ 
describes a field of bosons, which we will 
always take to be a field of photons. 
The $m$ input-output channels are 
represented on the symmetric Fock 
space $\mathcal{F}_\mathbb{R}$ and 
should be interpreted as $m$ channels 
in the electromagnetic field. We will 
often shorten $\mathcal{F}_\mathbb{R}$
to $\mathcal{F}$.
The complete node is then represented 
on the Hilbert space $\mathcal{H}\otimes\mathcal{F}$.

For $f \in L^2(I; \mathbb{C}^m)$, we define 
the \emph{exponential vector} $e(f) \in \mathcal{F}_I$ as
\begin{equation}\label{def exponential vector}
e(f) = 1 \oplus \bigoplus_{n= 1}^\infty \frac{1}{\sqrt{n!}}f^{\otimes n}.
\end{equation}
We call the linear span of the exponential vectors 
the \emph{exponential domain} 
which is dense in $\mathcal{F}_I$.

We  introduce 
for all  $1\le i,j \le m$ and $t\ge 0$
the fundamental noises $A^i_t,\ A^{i*}_t$ 
and $\Lambda^{ij}_t$ 
on the exponential domain of $\mathcal{F}$
by (see also \cite{HuP84, Par92})
  \begin{equation}\label{def fundamental noises}\begin{split}
  &A^i_t e(f) = \left(\int_0^t f_i(s)ds\right)\, e(f),\\
  &\big\langle e(g), A^{i*}_t e(f)\big\rangle  = 
  \left(\int_0^t\overline{g}_i(s)ds\right)\, \big\langle e(g),
  e(f)\big\rangle, \\
  &\big\langle e(g), \Lambda^{ij}_t e(f)\big\rangle  = 
  \left(\int_0^t\overline{g}_i(s)f_j(s)ds\right)\, 
  \big\langle e(g),e(f)\big\rangle, 
  \end{split}\end{equation}
for all $f$ and $g$ in $L^2(\mathbb{R}; \mathbb{C}^m)$.
$A^i_t$ and $A^{i*}_t$ are called the 
\emph{annihilation} and \emph{creation}
processes, respectively. The processes 
$\Lambda^{ij}_t$ are called \emph{gauge} 
processes.

For an interval $I \subset \mathbb{R}$, we denote by 
$\chi_I$ the \emph{indicator function of} $I$, i.e.\ the 
function that is 1 on I and 0 elsewhere in $\mathbb{R}$.
For all $f \in L^2(\mathbb{R}; \mathbb{C}^m)$ and
$I \subset \mathbb{R}$, we can now define $f_I$ by
\begin{equation*}
f_{I}(x) = f(x)\chi_I(x), \ \ \mbox{for all}\ \ x \in \mathbb{R}.
\end{equation*}
We will use the shorthand $f_{t)} = f_{(-\infty, t)}$ and 
$f_{[t} = f_{[t, \infty)}$ in the following. There exists 
a unique unitary isomorphism $\iota$ from 
$\mathcal{F}$ to $\mathcal{F}_{(-\infty,t)}\otimes\mathcal{F}_{[t,\infty)}$ 
such that (see e.g.\ \cite[Prop 19.6]{Par92}) 
\begin{equation*}
\iota\big(e(f)\big) = e(f_{t)})\otimes e(f_{[t}), 
\end{equation*}
 for all $f \in L^2(\mathbb{R}; \mathbb{C}^m)$ and $t \ge 0$.
In the following we 
will simply identify $\mathcal{F}$ and 
$\mathcal{F}_{(-\infty,t)}\otimes\mathcal{F}_{[t,\infty)}$, 
i.e.\ to keep notation light, we will no 
longer write the unitary map $\iota$ between them.
A process $L_s,\, (s\ge 0)$ on $\mathcal{H}\otimes \mathcal{F}$ 
is called \emph{adapted} if $L_s$ acts nontrivially only 
on $\mathcal{H}\otimes \mathcal{F}_{(-\infty, s)}$ and 
acts as the identity on $\mathcal{F}_{[s, \infty)}$ for 
all $s\ge 0$.

It is possible \cite{HuP84} to define  stochastic 
integrals of adapted processes $L_s$ against 
the fundamental noises. That is, it is possible to give meaning 
to the expression $X_t = X_0 + \int_0^tL_sdM_s$ where 
$M_s$ is one of the fundamental noises $A^i_s,\ A^{i*}_s$ 
or $\Lambda^{ij}_s$. The expression can be written in 
shorthand as $dX_t = L_tdM_t$. 

More importantly, these stochastic integrals
satisfy a calculus with which they can be 
manipulated in calculations \cite{HuP84}. 
The calculus consists of the following. Suppose $X_t$ and 
$Y_t$ are stochastic integrals, i.e.\ 
$dX_t = L^1_tdM^1_t$ and $dY_t = L^2_tdM^2_t$ where 
$L^1$ and $L^2$ are adapted processes and $M^1$ 
and $M^2$ are fundamental noises, then the product 
$X_tY_t$ is itself a stochastic integral. Moreover, 
the product $X_tY_t$ satisfies the following 
quantum It\^o rule, (integration by parts rule)
  \begin{equation}\label{equation integration by parts}
  d(X_t Y_t) = X_tdY_t + (dX_t)Y_t + dX_tdY_t, 
  \end{equation}
where to evaluate $dX_tdY_t$ we use that the 
increment $dM_t$ of a fundamental noise commutes 
with all adapted processes, and products $dM^1_tdM^2_t$ 
are given by the following quantum It\^o table \cite{HuP84}  
\begin{center}
{\large \begin{tabular} {l|lll}
$dM^1\backslash dM^2$ & $dA^{i*}_t$ & $d\Lambda^{ij}_t$ & $dA^i_t$ \\
\hline 
$dA^{k*}_t$ & $0$ & $0$ & $0$ \\
$d\Lambda^{kl}_t$ & $\delta_{li}dA^{k*}_t$ & $\delta_{li}d\Lambda^{kj}_t$ & $0$  \\
$dA^k_t$ & $\delta_{ki}dt$ & $\delta_{ki}dA^j_t$ & $0$ 
\end{tabular} }
\end{center}
and all products $dM_tdt$ and $dtdM_t$ are zero.
As an example, suppose $dX_t = L^1_tdA^i_t$ and 
$dY_t = L^2_tdA^{i*}_t$, then 
$d(X_tY_t) = X_tL^2_tdA^{i*}_t + L^1_tY_tdA^{i}_t + L^1_tL^2_tdt$. 

The definition of the quantum stochastic integral 
and its associated calculus \cite{HuP84} also immediately 
lead to the study of quantum stochastic differential equations (QSDE's). 
In this paper we study the following QSDE on $\mathcal{H}\otimes \mathcal{F}$
\begin{equation}\label{definition Ut}\begin{split}
    dU_t& = \Bigg\{\sum_{i,j = 1}^m\big(S_{i j}-\delta_{ij}\big) d \Lambda^{ij}_t + \sum_{i=1}^m L_i dA^{i*}_{t}\\ 
     & -\sum_{i,j = 1}^m L^*_i S_{i j}dA^j_t 
    -\frac{1}{2}\sum_{i= 1}^m L^*_iL_i dt - iHdt\Bigg\}U_t, 
\end{split}\end{equation} 
with initial condition $U_0 =I$ and $t\ge 0$. Here $(S, L, H)$ is 
a so-called $SLH$-triple, i.e.\ $S$ is a collection 
of bounded operators $S_{ij}, \ 1 \le i,j \le m$ on $\mathcal{H}$, 
such that
\begin{equation*}
  \sum_{j=1}^m S_{ji}^*S_{jl} = \delta_{il}, \ \ \ \ \sum_{j=1}^m S_{ij}S_{lj}^* = \delta_{il},
\end{equation*}
(i.e.\ $S$ is a unitary $m\times m$ matrix with coefficients in $\mathcal{H}$), 
$L$ is a collection of bounded operators $L_i, 1\le i\le m$ on 
$\mathcal{H}$ and $H$ is a self-adjoint bounded operator 
on $\mathcal{H}$.

It follows from \cite{HuP84} that an adapted 
process $U_t$ that solves
Eqn.\ \eqref{definition Ut} exists, is 
unique, and is strongly continuous in $t$. Furthermore, it follows from \cite{HuP84}
that the solution to Eqn.\ \eqref{definition Ut} is 
unitary if and only if the coefficients of the 
equation are as in Eqn.\ \eqref{definition Ut} 
and form an $SLH$-triple. 

We are now ready to introduce the time 
evolution of a single node of a network. We 
let $\theta_t$ be the \emph{left shift} on 
$L^2(\mathbb{R}; \mathbb{C}^m)$, i.e.\
\begin{equation*}
\theta_t(f)(x) = f(x+t)\ \ \ \  \mbox{for all}\ x \in \mathbb{R}.
\end{equation*}
We denote by $\Theta_t: \mathcal{F} \to \mathcal{F}$ 
the second quantisation of the map $\theta_t$. $\Theta_t$ represents the 
free time evolution of the $m$ input-output channels. The photons simply
fly from right to left in units such that $c$, the speed of light, 
equals $1$ which allows us to identify time $t$ with position $x$ 
along the channel.

The quantum system with Hilbert space $\mathcal{H}$ is 
located at the origin of the node and perturbs the free 
evolution $\Theta_t$ in a way that is completely determined 
by the $SLH$-triple associated to the node. The $SLH$-triple 
of the node determines via Eqn.\ \eqref{definition Ut}
an adapted unitary process $U_t$ that is a 
cocycle with respect to the shift $\Theta_t$, i.e.
\begin{equation}\label{cocycle Ut}
U_{t+s} = \Theta_{-s}U_t\Theta_{s} U_s.
\end{equation}
Note that $U_t$ is not a one-parameter group 
of unitaries. We can however, define a one-parameter 
group of unitaries $\hat{U}_t$ by
\begin{equation}\label{definition Uhat}
\hat{U}_t =  \left\{ \begin{array}{ll}
 \Theta_t U_t & \mbox{\ \ \ if \ } t \ge 0 \\
 U^*_{-t} \Theta_t & \mbox{\ \ \ if \ } t < 0 
  \end{array}\right. .
\end{equation}

\section{$SLH$-networks}\label{section SLH-networks}

Suppose that we have $N$ separate components with 
initial spaces $\mathcal{H}_1,\ldots, \mathcal{H}_N$. 
We next introduce channels with multiplicities 
$m_1,\ldots, m_N$, symmetric Fock spaces 
$\mathcal{F}_{1},\ldots, \mathcal{F}_{N}$ and $SLH$-triples $(S^1, L^1, H^1),\ldots, (S^N, L^N, H^N)$. 
We can now introduce:
\begin{equation*}\begin{split}
&\mathcal{H} = \mathcal{H}_1\otimes\ldots\otimes\mathcal{H}_N, \\
&m = m_1 + \ldots+m_N, \\
&\mathcal{F} = \mathcal{F}_1 \otimes\ldots\otimes\mathcal{F}_N.
\end{split}\end{equation*}
As $\mathcal{F}_i$ is the symmetric second-quantisation of 
$L^2(\mathbb{R}; \mathbb{C}^{m_i})$, we can again 
use \cite[Prop 19.6]{Par92} to exploit the canonical 
isometry between the tensor products of the Fock spaces 
and the symmetric second-quantisation of the direct sum 
of the initial spaces to conclude that
\begin{equation*}
\mathcal{F} = \mathbb{C}\oplus \bigoplus_{n=1}^\infty L^2(\mathbb{R}; \mathbb{C}^m)^{\otimes_s n}.
\end{equation*} 
That is, the combined system of all the components 
has Hilbert space $\mathcal{H}\otimes\mathcal{F}$, 
where $\mathcal{F}$ has multiplicity $m$. By ampliation 
with the identity, we can now extend all operators in 
the $SLH$-triples of the components and all the 
quantum noises to the space $\mathcal{H}\otimes\mathcal{F}$. 
We re-label the noises such that the labels run from $1$ to $m$. 
That is, the noises labeled with $1,\ldots,m_1$ are the 
noises that act non-trivially on $\mathcal{F}_1$, 
the noises labeled with $m_1+1, \ldots,m_1+m_2$ 
are the noises that act non-trivially on $\mathcal{F}_2$ 
and so on up to the noises labeled by 
$(\sum_{i=1}^{N-1}m_i) +1,\ldots,m$ 
which are the noises that act non-trivially on $\mathcal{F}_N$.

\begin{definition} {\bf \cite{GoJ09, GoJ09b}}\label{definition SLH concatenation}
The $SLH$-triple $(S,L,H)$ of the network that consists of 
$N$ components with initial spaces $\mathcal{H}_1,\ldots, \mathcal{H}_N$, 
multiplicities $m_1,\ldots, m_N$, 
symmetric Fock spaces $\mathcal{F}_{1},\ldots, \mathcal{F}_{N}$ 
and $SLH$-triples $(S^1, L^1, H^1),\ldots, (S^N, L^N, H^N)$, is given by 
\begin{equation*}\begin{split}
&S =
\begin{bmatrix}
S_{11} & \ldots & S_{1m} \\
\vdots & & \vdots\\
S_{m1}& \ldots & S_{mm}
\end{bmatrix} = 
\begin{bmatrix}
S^1 & 0 & \ldots & 0 \\
0 & S^2 & \ldots & 0 \\
\vdots & \vdots & & \vdots \\
0 & 0&\ldots&S^N
\end{bmatrix},                      \\
&L = \begin{bmatrix} 
L_1 \\
\vdots\\
L_m
\end{bmatrix}
= 
\begin{bmatrix}
L^1\\
\vdots \\
L^N
\end{bmatrix},\ \ \
H = \sum_{i=1}^N H^i.
\end{split}\end{equation*} 
We denote this as 
\begin{equation*}
(S, L, H) = (S^1, L^1, H^1) \boxplus (S^2, L^2, H^2) \boxplus \ldots \boxplus (S^N, L^N, H^N).
\end{equation*}
\end{definition}

The time evolution of the network, 
consisting of the $N$ components (without 
connections between inputs and outputs which 
will be introduced below), is governed by 
the cocycle $U_t$ which satisfies equation  
\eqref{definition Ut} where the $SLH$-triple $(S,L,H)$ 
is given by Definition \ref{definition SLH concatenation}. 
It follows that $U_t = U_t^1 \otimes \ldots \otimes U_t^N$. 

By connecting specific outputs and inputs, a genuine 
network emerges.  An output can be 
connected only to a single input, and \textit{vice versa}. Note 
that it is possible to connect the output of a certain channel 
to the input of that same channel.
 
Each feedback connection $\mathsf{c}$ is a triple $(r, s, \xi)$ 
of an input $r$ (\textit{range}) and an output 
$s$ (\textit{source}), i.e., $s, r \in \{1,\ldots,m\}$. 
So the output $s$ is fed back into the network as 
input to $r$ and the corresponding time delay is 
$\xi  >0$.  $\mathcal{C}$ denotes the set of all 
connections in the system. The shortest line length 
in the system is denoted by $\xi_{\text{min}} =
\mbox{min}\big\{ \xi ; \ (s,r,\xi) \in \mathcal{C}\big\}$.

\subsection{The Braid Map}

The braid map will be defined as a unitary: for a certain 
connection, it swaps what is present in 
the output with what is present in 
the input.  For a certain connection $\mathsf{c} = (r,s,\xi)$ there 
are $4$ mutually exclusive possibilities 
for the $i$th channel:
\begin{enumerate}
\item Channel $i$ is neither the output (source) nor the input (range) of the connection $\mathsf{c}$: $i \neq r$ and $i \neq s$,
\item The connection $\mathsf{c}$ is such that $s \neq r$ and channel $i$ is the input (range), i.e.\ $i =r$,
\item The connection $\mathsf{c}$ is such that $s \neq r$ and channel $i$ is the output (source), i.e.\ $i =s$,
\item The connection $\mathsf{c}$ is such that $s = r$ and channel $i$ is the output (source) and the input (range), i.e.\ $i =s = r$.
\end{enumerate}

\begin{definition} {\bf\cite{BoG26}}\label{definition braid map}
Let $0 < \sigma \le \xi_{\text{min}}$. 
We define a unitary map 
$b_\sigma(\mathsf{c}):\ L^2(\mathbb{R}; \mathbb{C}^m) \to L^2(\mathbb{R}; \mathbb{C}^m)$ 
by defining its vector components 
$\big(b_\sigma (\mathsf{c})f\big)_i$ for $i = 1,\dots, m$.
The $i$th vector component is defined 
according to which of the four possibilities the channel $i$ 
belongs to for the given connection $\mathsf{c}$:
\begin{enumerate}
\item if $i \neq r$ and $i \neq s:\ \forall t \in \mathbb{R}:$ 
\begin{equation*}
\big(b_\sigma (\mathsf{c}) f\big)_ i(t) =  f_i(t),
 \end{equation*}
\item if $i = r \neq s:\ \forall t \in \mathbb{R}:$
\begin{equation*}
\big(b_\sigma (\mathsf{c})f\big)_{r}(t) = f_{s}\big(t-\xi\big)\chi_{[\xi,\, \xi+\sigma )}(t) + f_{r}(t) \chi_{[\xi,\, \xi+\sigma )^c}(t),
\end{equation*}
\item if $i = s \neq r:\ \forall t \in \mathbb{R}:$
\begin{equation*}
\big(b_\sigma (\mathsf{c})f\big)_{s}(t) = f_{r}\big(t + \xi\big)\chi_{[0,\sigma )}(t) + f_{s}(t)\chi_{[0,\sigma )^c}(t),
\end{equation*}
\item if $i = s= r:\ \forall t \in \mathbb{R}:$
\begin{equation*}\begin{split}
\big(b_\sigma& (\mathsf{c})f\big)_i(t) =  f_i\big(t-\xi\big)\chi_{[\xi,\, \xi+\sigma )}(t) + f_i\big(t + \xi\big)\chi_{[0,\sigma )}(t)\\
&+\,  f_i(t)\Big(\chi_{(-\infty,0)}(t) + \chi_{[\sigma , \xi)}(t)  + \chi_{[\xi+\sigma , \infty)}(t) \Big).
\end{split}\end{equation*}
\end{enumerate}
Note that $b_\sigma (\mathsf{c})$ is a unitary map on $L^2(\mathbb{R}; \mathbb{C}^m)$ and that $b_\sigma (\mathsf{c}_1)$ commutes with $b_\sigma (\mathsf{c}_2)$ for all $\mathsf{c}_1, \mathsf{c}_2 \in \mathcal{C}$. We 
define the \textbf{one-particle braid map} on $L^2(\mathbb{R}; \mathbb{C}^m)$ as
\begin{equation*}
b_\sigma  = \prod_{\mathsf{c}\in \mathcal{C}} b_\sigma  (\mathsf{c}) .
\end{equation*}
Furthermore, we may include the $\sigma =0$ case by taking $b_0$ to be the identity map on $ L^2(\mathbb{R})$. 
For $0\le \sigma \le \xi_{\text{min}}$, let $B_\sigma : \mathcal{F} \to \mathcal{F}$ be the second-quantisation of $b_\sigma $, and this defines a unitary. We call $B_\sigma $ the \textbf{braid map}. We extend $B_\sigma $ to $\mathcal{H}\otimes \mathcal{F}$, by ampliation with the identity on $\mathcal{H}$.
\end{definition}

\subsection{Time evolution of the network}
For $z \in \mathbb{R}$, we shall denote by $\lfloor z\rfloor$ 
the floor of $z$, i.e., the largest integer less
than or equal to $z$.

\begin{definition}{\bf \cite{BoG26}}\label{definition time evolution network}
Let $0 < \sigma \le \xi_{\text{min}}$. We define a unitary cocycle $\mathcal{U}_t$  on 
$\mathcal{H}\otimes \mathcal{F}$ by
\begin{equation*}
\mathcal{U}_t = \Theta_{\lfloor \tfrac{t}{\sigma } \rfloor \sigma }^* B_{t- \lfloor \tfrac{t}{\sigma } \rfloor \sigma }  U_{t- \lfloor \tfrac{t}{\sigma } \rfloor \sigma } \Theta_{\lfloor \tfrac{t}{\sigma } \rfloor \sigma }\overleftarrow{\prod_{i = 0}^{\lfloor \tfrac{t}{\sigma } \rfloor-1}}\Theta_{i\sigma }^* B_\sigma   U_\sigma  \Theta_{i\sigma }, 
\end{equation*}
where the product is the identity $I$ if $\lfloor \tfrac{t}{\sigma } \rfloor <1$.
Note that the definition of $\mathcal{U}_t$ does not depend 
on which $\sigma  \in (0,\xi_{\text{min}}]$ has been used.
\end{definition}
Note that $\mathcal{U}_t$ is the cocycle that governs the 
time evolution of the connected network. The physical time evolution 
of the network is given by the following one-parameter group of unitaries
\begin{equation}
\label{definition CalUhat}
\hat{\mathcal{U}}_t =  \left\{ \begin{array}{ll}
 \Theta_t \mathcal{U}_t & \mbox{\ \ \ if \ } t \ge 0 \\
 \mathcal{U}^*_{-t} \Theta_t & \mbox{\ \ \ if \ } t < 0 
  \end{array}\right. .
\end{equation}

\section{A cascaded system}

Let $N=2$, $m_1 = m_2 \in \mathbb{N}\backslash\{0\}$
and $m= 2m_1=2m_2$.
Suppose $(S^1, L^1, H^1)$ and $(S^2, L^2, H^2)$ are 
the $SLH$-triples of component 1 and 2, respectively. 
Let $(S, L, H)$ denote the $SLH$-triple of the network 
that consists of the two components 
(see Definition \ref{definition SLH concatenation})
\begin{equation*}
(S, L, H) = (S^1, L^1, H^1) \boxplus (S^2, L^2, H^2).
\end{equation*}
For  $\tau > 0$, 
define the following  
set of connections for the network given by $(S,L,H)$
\begin{equation}\label{equation connection set}
\mathcal{C} = \Big\{\left(1,m_1+1,\tau\right), \left(2,m_1+2,\tau\right),\dots,\left(m_1, m,\tau\right)\Big\}.
\end{equation}
Let the network  evolution $\mathcal{U}^\tau_t$ be 
given by Definition \ref{definition time evolution network}
with $\sigma =\tau$, $U_t$ 
given by equation \eqref{definition Ut}, and 
the connection set $\mathcal{C}$ given by 
equation \eqref{equation connection set}.

Define a unitary process $P_t$ by 
\begin{widetext}
\begin{equation*}
dP_t =\sum_{i = 1}^{\frac{m}{2}}  \left\{d\Lambda^{i, \frac{m}{2}+ i}_t + d\Lambda^{\frac{m}{2}+ i, i}_t
 - d\Lambda^{i,i}_t - d\Lambda^{\frac{m}{2} + i, \frac{m}{2} + i}_t\right\}P_t, \qquad P_0 = I.
\end{equation*}
Note that $P_t$ swaps the channels of component $1$ and $2$ on the 
Fock space $\mathcal{F}_{[0,t]}$ and acts 
as the identity on $\mathcal{F}_{(-\infty,0)}$ and 
$\mathcal{F}_{(t,\infty)}$. Let $\Theta_t^1$ and 
$\Theta^2_t$ be the left shift on the channels of 
component $1$ and the channels of component $2$, 
respectively. 
Define $V_t := U_t^2 P_t U_t^1$, then 
$V_t$ is a cocycle with respect to the shift $\Theta_t$.

\begin{theorem}
The network evolution $\mathcal{U}_t^\tau$ is given by 
\begin{equation}\label{equation second result}\begin{split}
\mathcal{U}^\tau_t { } &=  \Big(\Theta_\tau^{2*} P_t U^1_t\Theta^2_\tau\Big)U^2_t,\qquad\qquad\qquad (0 \le t \le \tau),\\
\mathcal{U}^\tau_t { } &=   
\Theta^1_\tau\Big(\Theta_t^* P_\tau U^1_\tau\Theta_t\Big)\Big( \Theta_{\tau}^*V_{t-\tau} \Theta_{\tau}\Big)\Theta^{1*}_\tau U^2_\tau, \qquad\qquad\qquad (t \ge \tau).
\end{split}\end{equation}
\end{theorem}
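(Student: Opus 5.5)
The plan is to work directly from Definition~\ref{definition time evolution network} with $\sigma=\tau$. The whole argument rests on the factorisation $U_t=U^1_t\otimes U^2_t=U^1_tU^2_t$, where the two factors commute because they act on different tensor factors. Write $t=k\tau+\rho$ with $k=\lfloor t/\tau\rfloor$ and $0\le \rho<\tau$. I would first treat $0\le t<\tau$, where $\mathcal{U}^\tau_t=B_tU_t$, and get $t=\tau$ by strong continuity. Then I would treat $t\ge\tau$ by induction on $k$, using the cocycle identity \eqref{cocycle Ut} for $U^1$, $U^2$, and for $V_t=U^2_tP_tU^1_t$.

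\emph{Step 1 (one-particle identity).} For $0\le t\le\tau$, $b_t$ exchanges two pieces for every $i\le m_1$: the component-2 channel $m_1+i$ on $[0,t)$ and the component-1 channel $i$ on $[\tau,\tau+t)$. The one-particle map $p_t$ underlying $P_t$ swaps channel $i$ with channel $m_1+i$ on $[0,t)$. I would conjugate $p_t$ by the shift $\theta^2_\tau$ acting only on the component-2 channels. I would then check on functions $f\in L^2(\mathbb{R};\mathbb{C}^m)$ that $\theta^{2*}_\tau p_t\theta^2_\tau$ agrees with $b_t$, or with $b_t$ composed with a shift that is later absorbed. This fixes the direction of the shift, and I will settle it by evaluating both sides on $f$ pointwise. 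Second quantisation then gives the Fock-space identity, checked on exponential vectors.

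\emph{Step 2 ($0\le t\le\tau$).} From Step 1 and $B_tU_t=B_tU^1_tU^2_t$, I would move $U^1_t$ through the conjugation $\Theta^{2*}_\tau(\cdot)\Theta^2_\tau$. This is allowed because $U^1_t$ acts only on $\mathcal{H}_1\otimes\mathcal{F}_1$ and commutes with $\Theta^2_\tau$. The result is the first line of \eqref{equation second result}. The physical content is that component~2 acts first and its output reaches component~1 after delay $\tau$, as encoded in the connection set \eqref{equation connection set}.

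\emph{Step 3 ($t\ge\tau$).} I would write $\mathcal{U}^\tau_t$ as the ordered product of the blocks $\Theta^*_{i\tau}B_\tau U_\tau\Theta_{i\tau}$ for $i<k$, followed by the final partial block. Each block is rewritten with Step~2. I would then commute the pieces: $U^2$-pieces on later intervals past $U^1$-pieces on earlier intervals, and the $\Theta^1_\tau$ shifts past operators acting only on component 2. The aim is that, after pulling out $\Theta^1_\tau(\cdots)\Theta^{1*}_\tau U^2_\tau$, the remaining factors telescope into $\Theta^*_\tau V_{t-\tau}\Theta_\tau$ times the first-block remnant $\Theta^*_tP_\tau U^1_\tau\Theta_t$. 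This would be an induction on $k$: assuming the formula at time $t$, multiply by the next block at $t+\tau$ (or by the partial block). The factor $U^2$ on the new interval is then combined with $P\,U^1$ from the previous interval via the cocycle property of $V$ relative to $\Theta$. I expect this telescoping to be the main obstacle, since it is pure bookkeeping of which interval and which component every shifted operator acts on. To control it, I would verify each commutation on exponential vectors $e(f)$ with $f$ decomposed into $f_{[a,b)}$ pieces, relying on adaptedness and $\mathcal{F}=\mathcal{F}_{(-\infty,s)}\otimes\mathcal{F}_{[s,\infty)}$. Independence of $\sigma$, noted after Definition~\ref{definition time evolution network}, would let me check the partial final block with $\sigma=\rho$ if that is cleaner.
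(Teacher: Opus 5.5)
Your overall route (identify $B_t$ for $0\le t\le\tau$ as a shift-conjugated $P_t$, factor $U_t=U^1_tU^2_t$, regroup the blocks into copies of $V$ by cocycle identities) is the paper's. However, Step~1 as you set it up fails. You read the connection $(i,m_1+i,\tau)$ as range $i$, source $m_1+i$, so that $b_t$ swaps channel $m_1+i$ on $[0,t)$ with channel $i$ on $[\tau,\tau+t)$. A pointwise check then gives $b_t=\theta^{1*}_\tau p_t\theta^1_\tau$. By contrast, $\theta^{2*}_\tau p_t\theta^2_\tau$ swaps channel $i$ on $[0,t)$ with channel $m_1+i$ on $[\tau,\tau+t)$. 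These are different operators, already for trivial triples with $U^1=U^2=I$. No shift can be ``absorbed later'', since here $\mathcal{U}^\tau_t=B_tU_t$ exactly. Your reading would give $\mathcal{U}^\tau_t=(\Theta^{1*}_\tau P_tU^2_t\Theta^1_\tau)U^1_t$, which contradicts the first line of the statement. So the issue is not the direction of the shift but which component carries it.

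The statement, like the series product $\tilde S^2S^1$ derived afterwards, forces the opposite orientation: source $s=i$ in component~1 and range $r=m_1+i$ in component~2. This is the $(s,r,\xi)$ ordering the paper itself uses when defining $\xi_{\text{min}}$. With that reading, cases~2 and~3 of the braid map give exactly $b_t=\theta^{2*}_\tau p_t\theta^2_\tau$, hence $B_t=\Theta^{2*}_\tau P_t\Theta^2_\tau$ with no extra shift. Your physical gloss must be reversed accordingly: component~1's output on $[0,t)$ is placed in component~2's input slot $[\tau,\tau+t)$.

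Step~3 is a plan rather than an argument, and it misses the identity that does the work. Since $\mathcal{U}^\tau_{k\tau+\Delta}=\Theta^*_{k\tau}B_\Delta U_\Delta\Theta_{k\tau}\,\mathcal{U}^\tau_{k\tau}$ and $\Theta^1_\tau$ commutes with $\Theta_{k\tau}$, your induction step (for $0\le\Delta\le\tau$) reduces to
\[
B_\Delta U_\Delta\,\Theta^1_\tau P_\tau U^1_\tau=\Theta^1_\tau\big(\Theta^*_\Delta P_\tau U^1_\tau\Theta_\Delta\big)V_\Delta .
\]
After this, the cocycle property of $V$ merges $\Theta^*_{k\tau}V_\Delta\Theta_{k\tau}$ with $\Theta^*_\tau V_{(k-1)\tau}\Theta_\tau$. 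Proving the displayed identity needs the cocycle property of $P_tU^1_t$, not just of $V$:
\begin{enumerate}
\item Split $P_\tau U^1_\tau=\Theta^*_\Delta(P_{\tau-\Delta}U^1_{\tau-\Delta})\Theta_\Delta\,P_\Delta U^1_\Delta$.
\item Commute $U^2_\Delta$ past the middle factor to form $V_\Delta$.
\item Reassemble $\Theta^*_{\tau-\Delta}(P_\Delta U^1_\Delta)\Theta_{\tau-\Delta}\,P_{\tau-\Delta}U^1_{\tau-\Delta}=P_\tau U^1_\tau$.
\end{enumerate}
This is how the paper treats the last partial block. The paper handles the full blocks separately, by rewriting $(B_\tau U_\tau\Theta_\tau)^{k-1}$ as powers of $V_\tau\Theta_\tau$. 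Once this identity is supplied, your induction would cover full and partial blocks with one lemma.

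Two smaller points. First, $\Theta^*_tP_\tau U^1_\tau\Theta_t$ is not a first-block remnant. After the outer conjugation by $\Theta^1_\tau$ it carries $U^1$ on the latest interval $[t-\tau,t)$, whose output has not yet reached component~2. Second, switching to $\sigma=\rho$ does not help: it changes every block, the $B_\rho$-blocks no longer regroup into copies of $V_\tau$, and $\rho$ may be $0$.
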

\begin{proof}
It follows from 
Definition \ref{definition braid map} with
the set of connections $\mathcal{C}$ defined in equation \eqref{equation connection set}, 
that: 
\begin{equation}\label{equation braid explicit}
B_t = \Theta_\tau^{2*} P_t \Theta^2_\tau,\qquad\qquad (0 \le t\le \tau).
\end{equation}
For $0\le t \le \tau$, Definition \ref{definition time evolution network} leads to
\begin{equation*}
\mathcal{U}_t^\tau = B_t   U_t  =\Theta_\tau^{2*} P_t \Theta^2_\tau U^1_tU^2_t =
\Big(\Theta_\tau^{2*} P_t U^1_t\Theta^2_\tau\Big)U^2_t.
\end{equation*}
Introduce the shorthand $\Delta =t- \lfloor \tfrac{t}{\tau } \rfloor \tau$. For $t \ge \tau$, Definition \ref{definition time evolution network} leads to
\begin{equation*}\begin{split}
\mathcal{U}^\tau_t { } &= \Theta_{\lfloor \tfrac{t}{\tau } \rfloor \tau }^* 
B_\Delta  U_\Delta \Theta_{\lfloor \tfrac{t}{\tau } \rfloor \tau }\overleftarrow{\prod_{i = 0}^{\lfloor \tfrac{t}{\tau } \rfloor-1}}\Theta_{i\tau }^* B_\tau   U_\tau  \Theta_{i\tau } 
=
 \Theta_{\lfloor \tfrac{t}{\tau } \rfloor \tau }^* 
B_\Delta U_\Delta \Theta_\tau 
\Big( B_\tau   U_\tau  \Theta_{\tau }\Big)^{\lfloor \tfrac{t}{\tau } \rfloor-1} B_\tau U_\tau \\
{ } &=  \Theta_{\lfloor \tfrac{t}{\tau } \rfloor \tau }^* B_\Delta U_\Delta \Theta_\tau 
\Big(
\Theta_\tau^{2*}P_\tau\Theta^2_\tau U^1_\tau U^2_\tau\Theta_\tau
\Big)^{\lfloor \tfrac{t}{\tau } \rfloor-1} \Theta_\tau^{2*}P_\tau\Theta^2_\tau U^1_\tau U^2_\tau \\
{ } &=
\Theta_{\lfloor \tfrac{t}{\tau } \rfloor \tau }^* B_\Delta U_\Delta \Theta^1_\tau 
P_\tau\Theta^2_\tau U^1_\tau \Big(U^2_\tau\Theta_\tau^1
P_\tau\Theta^2_\tau U^1_\tau \Big)^{\lfloor \tfrac{t}{\tau } \rfloor-1}U^2_\tau\\
{ } &=
\Theta_{\lfloor \tfrac{t}{\tau } \rfloor \tau }^* B_\Delta U_\Delta \Theta^1_\tau 
P_\tau U^1_\tau\Theta^2_\tau  \Big(\Theta_\tau^1U^2_\tau
P_\tau U^1_\tau\Theta^2_\tau  \Big)^{\lfloor \tfrac{t}{\tau } \rfloor-1}U^2_\tau\\
{ } &=
\Theta_{\lfloor \tfrac{t}{\tau } \rfloor \tau }^* B_\Delta U_\Delta \Theta^1_\tau 
P_\tau U^1_\tau\Theta_\tau  
\Big(U^2_\tau
P_\tau U^1_\tau\Theta_\tau  \Big)^{\lfloor \tfrac{t}{\tau } \rfloor-1}\Theta^{1*}_\tau U^2_\tau\\
{ } &=
\Theta_{\lfloor \tfrac{t}{\tau } \rfloor \tau }^* B_\Delta U_\Delta \Theta^1_\tau 
P_\tau U^1_\tau\Theta_{\lfloor \tfrac{t}{\tau } \rfloor\tau}  
\overleftarrow{\prod_{i = 1}^{\lfloor \tfrac{t}{\tau } \rfloor-1}}
\Big(\Theta_{i\tau}^*V_\tau\Theta_{i\tau}  \Big)\Theta^{1*}_\tau U^2_\tau\\
{ } &=
\Theta_{\lfloor \tfrac{t}{\tau } \rfloor \tau }^* B_\Delta U_\Delta \Theta^1_\tau 
P_\tau U^1_\tau\Theta_{\lfloor \tfrac{t}{\tau } \rfloor\tau}  
\Big(\Theta_\tau^*V_{\left(\lfloor \tfrac{t}{\tau } \rfloor-1\right)\tau}\Theta_\tau  \Big)\Theta^{1*}_\tau U^2_\tau,
\end{split}\end{equation*}
where the last step follows from the cocycle property of $V_t$. Furthermore, we have
\begin{equation*}\begin{split}
&B_\Delta U_\Delta \Theta^1_\tau 
P_\tau U^1_\tau =
\Theta_\tau^{2*} P_\Delta \Theta^2_\tau 
U^1_\Delta U^2_\Delta \Theta^1_\tau P_\tau U^1_\tau = 
(\Theta_\tau^{2*} P_\Delta U^1_\Delta\Theta_\tau U^2_\Delta) 
\Theta^*_\Delta (P_{\tau -\Delta} U^1_{\tau - \Delta})\Theta_\Delta
(P_\Delta U^1_\Delta) = \\
& \Theta_\tau^{2*} \Theta_{\tau-\Delta}\Big(\Theta_{\tau-\Delta}^*(P_\Delta U^1_\Delta)\Theta_{\tau-\Delta}\Big) 
 (P_{\tau -\Delta} U^1_{\tau - \Delta})\Theta_\Delta
(U^2_\Delta P_\Delta U^1_\Delta) =
\Theta^1_\tau (\Theta^*_\Delta P_\tau U_\tau^1\Theta_\Delta) V_\Delta.
\end{split}\end{equation*}
Substituting this into the expression for $\mathcal{U}^\tau_t$ yields for $t \ge \tau$
\begin{equation*}\begin{split}
\mathcal{U}_t^\tau { } &= 
\Theta_{\lfloor \tfrac{t}{\tau } \rfloor \tau }^*
\Theta^1_\tau (\Theta^*_\Delta P_\tau U_\tau^1\Theta_\Delta) V_\Delta
\Theta_{\lfloor \tfrac{t}{\tau } \rfloor\tau}  
\Big(\Theta_\tau^*V_{\left(\lfloor \tfrac{t}{\tau } \rfloor-1\right)\tau}\Theta_\tau  \Big)\Theta^{1*}_\tau U^2_\tau\\
{ } &=
\Theta^1_\tau
 (\Theta^*_t P_\tau U_\tau^1\Theta_t)\Theta^*_{\lfloor \tfrac{t}{\tau }\rfloor\tau} V_\Delta
\Theta_{\lfloor \tfrac{t}{\tau } \rfloor\tau}  
\Big(\Theta_\tau^*V_{\left(\lfloor \tfrac{t}{\tau } \rfloor-1\right)\tau}\Theta_\tau  \Big)\Theta^{1*}_\tau U^2_\tau\\
{ } &=
\Theta^1_\tau
 (\Theta^*_t P_\tau U_\tau^1\Theta_t)
(\Theta_\tau^*V_{t-\tau}\Theta_\tau)  \Theta^{1*}_\tau U^2_\tau. 
\end{split}\end{equation*}
\end{proof}

\section{The series product}

Since the systems are cascaded, it is natural 
to keep the same name for a channel after a 
connection was made. Therefore, one usually 
transports the output back to the channels of 
system $1$. To this end, 
define $\mathcal{V}^\tau_t := P_{t+\tau} \mathcal{U}^\tau_t$ for $t \ge \tau$, 
$\mathcal{V}^\tau_t := \Theta_\tau^* P_t\Theta_\tau P_t \mathcal{U}^\tau_t$ for $0 \le t < \tau$,
and $\mathcal{V}_t := P_tV_t$.

\begin{theorem}
The finite delay time evolution $\mathcal{V}^\tau_t$ converges 
strongly, uniformly on compact time intervals, to $\mathcal{V}_t$:
for all $\psi \in \mathcal{H}\otimes \mathcal{F}$ and 
all $T < \infty$:
\begin{equation}\label{equation main result}
 \sup_{t \in [0,T]}\left\| \big(\mathcal{V}^\tau_t - \mathcal{V}_t\big)\psi\right\|
\xrightarrow{\tau\to0} 0.
\end{equation}
Furthermore, the cocycle $\mathcal{V}_t$ satisfies 
the Hudson-Parthasarathy equation \eqref{definition Ut} with 
$SLH$-triple:
\begin{equation}\begin{split}\label{equation series product}
&(S,L,H) = \left(
\begin{bmatrix}
\tilde{S}^2S^1 & 0\\
0 & I
\end{bmatrix},\ 
\begin{bmatrix}
\tilde{L}^2 +\tilde{S}^2L^1 \\
0
\end{bmatrix},\ H^1+ H^2 + \mbox{Im}\left((\tilde{L}^{2*})^T \tilde{S}^2L^1\right)\right), \\
&\mbox{where:}\ \ \tilde{S}^2_{i,j} = S^2_{m_1+i, m_1+j},\ \ \tilde{L}^2_i = L^2_{m_1+i},\qquad (1 \le i,j \le m_1)
\end{split}\end{equation}
which is precisely the series product defined in \cite{GoJ09, GoJ09b, Gar93}.
\end{theorem}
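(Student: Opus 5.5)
The plan has two parts. First I identify the QSDE satisfied by $\mathcal{V}_t = P_t U^2_t P_t U^1_t$. Then I prove the strong convergence using the explicit formula \eqref{equation second result} of the previous theorem.

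\emph{The QSDE for $\mathcal{V}_t$.} Conjugation by the swap $P_t$ is easiest to handle directly. Write $P_tU^2_tP_t =: \tilde U^2_t$. Since $P_t$ swaps channels $i$ and $m_1+i$ on $\mathcal{F}_{[0,t]}$, and $U^2_t$ is adapted and built from the noises of component $2$, the process $\tilde U^2_t$ solves \eqref{definition Ut} with the same coefficients but with the noise labels $m_1+i$ replaced by $i$. In other words, $\tilde U^2_t$ is the Hudson--Parthasarathy cocycle with triple $(\tilde S^2\oplus I,\ \tilde L^2\oplus 0,\ H^2)$ acting on the channels $1,\dots,m_1$. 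This can be checked by verifying the equation on exponential vectors $e(f)$, using $P_te(f)=e(\pi_t f)$, where $\pi_t$ swaps the components on $[0,t]$.

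We then have $\mathcal{V}_t=\tilde U^2_t U^1_t$. This is a product of two HP cocycles with triples $(\tilde S^2\oplus I,\tilde L^2\oplus 0,H^2)$ and $(S^1\oplus I, L^1\oplus 0, H^1)$, whose initial operators commute since they live on $\mathcal{H}_2$ and $\mathcal{H}_1$. I apply the It\^o rule \eqref{equation integration by parts} with the It\^o table to $d(\tilde U^2_tU^1_t)$. The coefficient of $d\Lambda$ collects to $\tilde S^2S^1-I$, and that of $dA^*$ to $\tilde L^2+\tilde S^2L^1$. The $dt$ term gives $-\tfrac12(L^{1*}L^1+\tilde L^{2*}\tilde L^2)-\tilde L^{2*}\tilde S^2L^1 - i(H^1+H^2)$. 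I rewrite this as $-\tfrac12 L^*L - iH$ for the new $L$, which identifies $H$ with the stated imaginary-part correction. Unitarity of $\mathcal{V}_t$ then forces the $dA$ coefficient to be $-L^*S$; I will also check this directly. This part is routine.

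\emph{Convergence.} By unitarity it suffices to take $\psi=\phi\otimes e(f)$ with $f$ continuous and compactly supported, since such vectors are total and all operators involved are uniformly bounded. For $t\ge\tau$ I use \eqref{equation second result}:
\[
\mathcal{V}^\tau_t=P_{t+\tau}\Theta^1_\tau(\Theta_t^*P_\tau U^1_\tau\Theta_t)(\Theta_\tau^*V_{t-\tau}\Theta_\tau)\Theta^{1*}_\tau U^2_\tau .
\]
The steps are as follows.
\begin{enumerate}
\item Show that $U^2_\tau\to I$ and $U^1_\tau\to I$ strongly as $\tau\to 0$; this follows from strong continuity.
\item Show that $\Theta^{1}_\tau\to I$ and $\Theta_\tau\to I$ strongly, by continuity of shifts on $L^2$ and of second quantisation on exponential vectors.
\item Show that the factor $\Theta_t^*P_\tau U^1_\tau\Theta_t$ tends to $P$-type swaps on the shrinking interval $[t,t+\tau]$, which also converge strongly to $I$ uniformly in $t\in[0,T]$. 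The point is that $\|(\Theta_t^*X_\tau\Theta_t - I)e(f)\|$ is controlled by $\|(X_\tau-I)e(\theta_t f)\|$, and $\{\theta_tf: t\in[0,T]\}$ is compact in $L^2$.
\item Show that $\Theta_\tau^*V_{t-\tau}\Theta_\tau\to V_t$ uniformly in $t$, using joint strong continuity of $(s,g)\mapsto V_s e(g)$.
\item Combine: $P_{t+\tau}\to P_t$, so the product tends to $P_tV_t=\mathcal{V}_t$.
\end{enumerate}
Since all factors are unitary, the telescoping $\|A_\tau B_\tau\psi - AB\psi\|\le \|(B_\tau-B)\psi\|+\|(A_\tau-A)B\psi\|$ reduces everything to strong convergence of the individual factors, uniformly in $t$. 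The range $0\le t<\tau$ is handled similarly: every factor there tends to $I$, and $\mathcal{V}_t\to I$ as well.

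\emph{Main obstacle.} The hardest step is the \emph{uniformity} in $t\in[0,T]$ together with the fact that some factors, such as $P_{t+\tau}$ and the conjugated swaps, converge only strongly and not in norm. I would first establish a lemma: if unitaries $X_\tau\to I$ strongly, then $\sup_{t\in[0,T]}\|(\Theta_t^*X_\tau\Theta_t-I)\psi\|\to0$. This uses compactness of the orbit $\{\Theta_t\psi\}$ and equicontinuity of the uniformly bounded family $X_\tau$. I also need a uniform strong-continuity estimate for $V_s$ and $P_s$ on compact sets of exponential vectors, which I would get from the explicit exponential-vector formulas for $P_s$ and from HP estimates for $U^i_s$.
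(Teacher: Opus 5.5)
Your proposal is correct. The QSDE half is the paper's argument: conjugating $U^2_t$ by the swap gives $\tilde U^2_t$ with relabelled noises, and expanding $d(\tilde U^2_tU^1_t)$ with the It\^o table yields the series product. Your $d\Lambda$, $dA^*$ and $dt$ coefficients, and the resulting imaginary-part correction to $H$, are right.

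The convergence half takes a genuinely different route. The paper proves only $\mathcal{V}^\tau_t\psi\to\mathcal{V}_t\psi$ for each fixed $t$, as a product of uniformly bounded, strongly convergent factors. It then obtains uniformity on $[0,T]$ from the Trotter--Kato theorem applied to $\hat{\mathcal{V}}^\tau_t=\Theta_t\mathcal{V}^\tau_t$ and $\hat{\mathcal{V}}_t$, viewed as unitary groups. You obtain uniformity directly, from telescoping, uniform boundedness and compactness of $t$-orbits.

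Trotter--Kato is shorter, but it needs $\hat{\mathcal{V}}^\tau_t$ to be a group, and it is not. The extra swap of $[t,t+\tau)$ contained in $P_{t+\tau}$ breaks the cocycle identity $\mathcal{V}^\tau_{t+s}=\Theta_s^*\mathcal{V}^\tau_t\Theta_s\mathcal{V}^\tau_s$. This already happens for $U^1=U^2=I$ and $s,t\ge\tau$: there everything is second-quantised, and in channel $1$ on $[s,s+\tau)$ the one-particle maps of the two sides give $f_1(x-\tau)$ and $f_1(x)$ respectively. So the paper's step has to be rerouted through the genuine groups $\Theta_t\mathcal{U}^\tau_t\to\Theta_tV_t$. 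A separate uniform estimate for $P_{t+\tau}\to P_t$ is then needed, which is essentially your step 5. Your argument uses no group structure, so it avoids this issue entirely.

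Two minor remarks. In the telescoping, the conjugated factors act on $t$-dependent limit vectors such as $V_t\psi$, not on a fixed $\psi$. State your lemma as $\sup_{t\in[0,T]}\|(X_\tau-I)\Theta_t\phi_t\|\to0$ for continuous $t\mapsto\phi_t$; the same compactness proof works, since $\{\Theta_t\phi_t : t\in[0,T]\}$ is compact. Also, you need neither the reduction to exponential vectors nor any HP estimates. Strong continuity plus uniform boundedness already gives joint continuity of $(s,\phi)\mapsto V_s\phi$ and $(s,\phi)\mapsto P_s\phi$.
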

 
\begin{proof}
Note that $\mathcal{V}^\tau_0 =\mathcal{V}_0$. Since the cocycles $P_t, U_t^1, U_t^2, V_t$ and 
the shifts $\Theta^1_t$ and $\Theta_t$ are strongly continuous in $t$, 
we find for all $\psi \in \mathcal{H}\otimes \mathcal{F}$ and $t>0$:
\begin{equation*}
\lim_{\tau \to 0}\mathcal{V}^\tau_t\psi 
=
 \lim_{\tau \to 0}P_{t+\tau}\Theta^1_\tau\Big(\Theta_t^* P_\tau U^1_\tau\Theta_t\Big)\Big( \Theta_{\tau}^*V_{t-\tau} \Theta_{\tau}\Big)\Theta^{1*}_\tau U^2_\tau\psi 
=P_tV_t\psi = \mathcal{V}_t\psi .
\end{equation*}
Note that $\hat{\mathcal{V}}^\tau_t$ and $\hat{\mathcal{V}}_t$
(see equation \eqref{definition CalUhat} for the definition)
are both strongly continuous unitary one-parameter groups on 
$\mathcal{H}\otimes\mathcal{F}$. Since the shift $\Theta_t$ is 
unitary, we have for all $\psi \in \mathcal{H}\otimes\mathcal{F}$
\begin{equation*}
\left\| \left(\hat{\mathcal{V}}^\tau_t - \hat{\mathcal{V}}_t\right) \psi \right\| = \left\| \big(\mathcal{V}^\tau_t - \mathcal{V}_t\big)\psi\right\|
\xrightarrow{\tau\to0} 0.
\end{equation*}
Due to the Trotter-Kato Theorem, see e.g.\ \cite[Ch.~1 Thm 6.1 (b) $\Rightarrow$ (a)]{KuE86}, we can now strengthen 
the convergence of $\hat{\mathcal{V}}^\tau_t$ to $\hat{\mathcal{V}}_t$ to strong convergence uniformly on compact 
time intervals, i.e.\
for all $\psi \in \mathcal{H}\otimes \mathcal{F}$ and 
all $T < \infty$:
\begin{equation*}
\sup_{t \in [0,T]}\left\| \left(\hat{\mathcal{V}}^\tau_t - \hat{\mathcal{V}}_t\right) \psi \right\|
\xrightarrow{\tau\to0} 0.
\end{equation*} 
Since the shift is unitary (see the definition in equation \eqref{definition CalUhat}), 
strong convergence, uniformly on compact time intervals, of 
the cocyles follows as well. That is, the result in equation \eqref{equation main result}
has been established.

Define $\tilde{U}_t^2 := P_tU_t^2 P_t = P_t^*U_t^2 P_t$. Note that sandwiching with $P_t$ swaps the 
channels of components $1$ and $2$, i.e.\  $\tilde{U}_t^2$ satisfies:
\begin{equation*}\begin{split}
&d\tilde{U}_t^2 = \Bigg\{\sum_{i,j = 1}^{m_1}\big(\tilde{S}^2_{i j}-\delta_{ij}\big) d \Lambda^{ij}_t + 
      \sum_{i=1}^{m_1} \tilde{L}^2_i dA^{i*}_{t} 
      -\sum_{i,j = 1}^{m_1} \tilde{L}^{2*}_i \tilde{S}^2_{i j}dA^j_t 
    -\frac{1}{2}\sum_{i= 1}^{m_1} \tilde{L}^{2*}_i\tilde{L}^2_i dt - iH^2dt\Bigg\}\tilde{U}_t^2, \qquad \tilde{U}^2_0 = I, \\
&\tilde{S}^2_{i,j} = S^2_{m_1+i, m_1+j},\ \ \tilde{L}^2_i = L^2_{m_1+i},\qquad (1 \le i,j \le m_1).
\end{split}\end{equation*}
Note that this can also be seen by 
writing out the product $P_t U^2_t P_t $
using the integration by parts rule, 
equation \eqref{equation integration by parts}, and 
the quantum It\^o table.

Next, writing out $\mathcal{V}_t = P_t U^2_t P_t U^1_t = \tilde{U}^2_t U^1_t$
using equation \eqref{equation integration by parts} and 
the quantum It\^o table, yields
\begin{equation*}
d\mathcal{V}_t  = \Bigg\{\sum_{i,j = 1}^m\big(S_{i j}-\delta_{ij}\big) d \Lambda^{ij}_t + \sum_{i=1}^m L_i dA^{i*}_{t} 
      -\sum_{i,j = 1}^m L^*_i S_{i j}dA^j_t 
    -\frac{1}{2}\sum_{i= 1}^m L^*_iL_i dt - iHdt\Bigg\} \mathcal{V}_t,\qquad \mathcal{V}_0 = I,
\end{equation*}
with
\begin{equation*}
(S,L,H) = \left(
\begin{bmatrix}
\tilde{S}^2S^1 & 0\\
0 & I
\end{bmatrix},\ 
\begin{bmatrix}
\tilde{L}^2 +\tilde{S}^2L^1 \\
0
\end{bmatrix},\ H^1+ H^2 + \mbox{Im}\left((\tilde{L}^{2*})^T \tilde{S}^2L^1\right)\right).
\end{equation*}
\end{proof}
\end{widetext}
\bibliography{series_product}

\end{document}